\theoremstyle{plain}
\newtheorem{theorem}{Theorem}
\newtheorem{lemma}[theorem]{Lemma}
\newtheorem{proposition}[theorem]{Proposition}
\theoremstyle{definition}
\theoremstyle{remark}
\newtheorem*{remark}{Remark}
\newcommand{\Mp}{\ensuremath{\mathrm{mp}}}
\newcommand{\Bc}{\ensuremath{\gamma_b}}
\newcommand{\R}{\text{rad}}
\tikzstyle{vertex}=[circle, fill=black, inner sep= 0, minimum size = 4]
\tikzstyle{unselected}=[circle, draw, fill=white, inner sep= 0, minimum size = 4]
\tikzstyle{unknown}=[circle, fill=black, inner sep= 0, minimum size = 2]
\author{Laurent Beaudou\affiliationmark{1}\thanks{Supported by ANR DISTANCIA (ANR-17-CE40-0015) and GraphEn (ANR-15-CE40-0009).}
  \and Richard Brewster \affiliationmark{2}}
\title{On the multipacking number of grid graphs}
\affiliation{
  Higher School of Economics, Moscow, Russian Federation\\
  Thompson Rivers University, Kamloops, BC, Canada.}
\keywords{grid graph, broadcast number, multipacking number}
\begin{document}
\publicationdetails{21}{2019}{3}{23}{4452}
\maketitle

\begin{abstract}

  In 2001, Erwin introduced \emph{broadcast domination} in graphs. It
  is a variant of classical domination where selected vertices may
  have different domination powers. The minimum cost of a dominating
  broadcast in a graph $G$ is denoted $\Bc(G)$. The dual of this
  problem is called \emph{multipacking}: a multipacking
  is a set $M \subseteq V(G)$ such that for any vertex $v$ and
  any positive integer $r$, the ball of radius $r$ around $v$ contains
  at most $r$ vertices of $M$. The maximum size of a
  multipacking in a graph $G$ is denoted $\Mp(G)$. Naturally $\Mp(G)
  \leq \Bc(G)$. Earlier results by Farber and by Lubiw show that
  broadcast and multipacking numbers are equal for strongly chordal
  graphs.
  
  In this paper, we show that all large grids (height at least $4$ and
  width at least $7$), which are far from being chordal, have their
  broadcast and multipacking numbers equal.
  \end{abstract}


\section*{Introduction}

Given a graph $G$ with vertex set $V$ and edge set $E$, a {\em
  dominating broadcast} of $G$ is a function $f$ from $V$ to
$\mathbb{N}$ such that for any vertex $u$ in $V$, there is a vertex
$v$ in $V$ with $f(v)$ positive and greater than the distance from $u$
to $v$. Define the \emph{ball of radius $r$ around $v$} by $N_r(v) =
\{ u : d(u,v) \leq r \}$.  Thus a dominating broadcast is a cover of
the graph with balls of several positive radii. The \emph{cost} of a
dominating broadcast $f$ is $\sum_{v \in V} f(v)$ and the minimum cost
of a dominating broadcast in $G$, its \emph{broadcast number}, is
denoted $\Bc(G)$.

\begin{remark}
  One may consider the cost to be any function of the powers
  (for example the sum of the squares), see e.g.~\cite{HeggernesLokshtanov2006}. We shall stick to the
  classical convention of linear cost.
\end{remark}

The dual problem of broadcast domination is \emph{multipacking}. A
multipacking in a graph $G$ is a subset $M$ of its vertices such that
for any positive integer $r$ and any vertex $v$ in $V$, the ball of
radius $r$ centred at $v$ contains at most $r$ vertices of $M$.  The
maximum size of a multipacking of $G$, its \emph{multipacking number}, is denoted $\Mp(G)$. 
We may write $\Bc$
and $\Mp$ when the graph in question is clear from context or
unimportant.

Broadcast domination was introduced by Erwin~\cite{Erwin2001,
  Erwin2004} in his doctoral thesis in 2001. Multipacking was then
defined in Teshima's Master's Thesis~\cite{Teshima2012} in 2012, see
also~\cite{Brewster2013}. However, this work fits into the general
study of coverings and packings, which has a rich history in Graph
Theory, see for example the monograph by Cornu\'ejols~\cite{Cornuejols2001}. 

Since minimum dominating broadcast and multipacking are dual problems,
we know that for any graph $G$,
\begin{equation*}
  \Mp(G) \leq \Bc(G).
\end{equation*}
A natural question comes to mind. Under which conditions are they
equal? For example, it is known that strongly chordal graphs have
their broadcast and multipacking numbers equal. This follows from a
primal-dual algorithm of Farber~\cite{Farber84} applied to
$\Gamma$-free matrices, used to solve the (weighted) dominating set
problem for strongly chordal graphs. The work of
Lubiw~\cite{Lubiw82,Lubiw87} shows the vertex-neighbourhood ball
incidence matrix is $\Gamma$-free for strongly chordal graphs, and
hence the primal-dual algorithm can also be used to solve the
broadcast domination problem for strongly chordal graphs.  For trees,
direct proofs of $\Mp(T) = \Bc(T)$ and linear-time algorithms to find
$\Mp(T)$ appear in~\cite{Brewster2013,Brewster2017} (see
also~\cite{Dabney2007,Dabney2009}). For strongly chordal graphs,
Farber's algorithm runs in $O(n^3)$ time. The general broadcast
domination problem can be solved in $O(n^6)$
time~\cite{HeggernesLokshtanov2006}.  In this paper we study grid
graphs which are far from being strongly chordal (or even chordal). We
show the following theorem.
\begin{theorem}
  \label{thm:grids}
  For any pair of integers~$n \geq 4$ and~$m \geq 4$,
  \begin{equation*}
    \Mp(P_n \Box P_m) = \Bc(P_n \Box P_m).
  \end{equation*}
  with the exception of $P_4 \Box P_6$ where $\Mp(P_4 \Box P_6) = 4$
  and $\Bc(P_4 \Box P_6) = 5$.
\end{theorem}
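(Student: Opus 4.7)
Since $\Mp(G)\le\Bc(G)$ holds in general, to prove equality on an admissible grid $P_n\Box P_m$ it suffices to exhibit a common integer $k=k(n,m)$ together with (a) a dominating broadcast of cost $k$ and (b) a multipacking of size $k$. Pinning down the right formula $k(n,m)$ is the first task: the obvious upper bound is the radius of $P_n\Box P_m$ (a single ball centred near the middle dominates the whole grid), but for some dimensions a cheaper cover by two or more balls is possible, so a careful case analysis on the residues of $n$ and $m$ modulo a small integer is needed before any constructions.

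\textbf{Upper bound via explicit broadcasts.} Writing a vertex as $(i,j)$ with $1\le i\le n$ and $1\le j\le m$, the ball $N_r(i,j)$ is an $\ell_1$-diamond of radius $r$ clipped to the grid. In each residue case I would describe an explicit broadcast: typically one or two balls whose centres lie on a diagonal between two opposite corners, with radii chosen so that consecutive diamonds just meet along the antidiagonal between them and so that together they cover the four boundary strips. Verifying the dominating property then amounts to checking, for each vertex $v$ of the grid, that $d(v,c_i)\le r_i$ for some chosen centre $c_i$ and radius $r_i$, a direct computation in each case.

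\textbf{Lower bound via multipackings.} For the matching inequality I would construct an explicit multipacking $M\subseteq V(P_n\Box P_m)$ of size $k(n,m)$. Two multipacking vertices constrain each other only when they lie in a common small ball, so the canonical place for them is the boundary of the grid, grouped into four corner patterns. After fixing a candidate $M$, the definition requires $|M\cap N_r(v)|\le r$ for every vertex $v$ and every radius $r\ge 1$; by the symmetries of the grid one may restrict $v$ to one quadrant, and by monotonicity in $r$ only those radii at which a new element of $M$ enters the ball need to be checked. This reduces the verification to finitely many linear inequalities in $n$, $m$ and the coordinates of $v$, again organised by the residues of $n$ and $m$.

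\textbf{Main obstacle and the exception.} The delicate point is to pick the corner patterns so that the multipacking inequality is tight for \emph{every} pair $(v,r)$: a pattern that is too dense near a corner is violated by a ball centred at some middle vertex, while a pattern that is too sparse fails to reach the required size $k(n,m)$. Matching these four corner contributions across the cases is what drives the whole case analysis, and for $P_4\Box P_6$ it actually fails: the opposite corners are so close that no multipacking of size $5$ can be assembled from any combination of corner patterns, even though an explicit broadcast of cost $5$ is easily exhibited and shown to be optimal, yielding the stated exception $\Mp(P_4\Box P_6)=4<5=\Bc(P_4\Box P_6)$. A finite check (by hand, or by brute force over the $24$ vertices) confirms that this is the only grid with $n,m\ge 4$ in which the construction has to be abandoned, completing the proof.
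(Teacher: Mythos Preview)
Your outline contains a factual error and, more importantly, no actual proof.

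\textbf{The broadcast side is already done.} You write that ``for some dimensions a cheaper cover by two or more balls is possible, so a careful case analysis on the residues of $n$ and $m$ modulo a small integer is needed.'' This is false for grids: Dunbar et al.\ proved $\Bc(P_n\Box P_m)=\lfloor n/2\rfloor+\lfloor m/2\rfloor$, which is exactly the radius. A single ball at the centre achieves it. So there is no case analysis to do on the broadcast side, and the target value $k(n,m)$ is known from the outset.

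\textbf{The multipacking construction is the entire content, and you have not given one.} Saying ``corner patterns'' and ``finitely many linear inequalities'' is not a proof; the whole difficulty is designing a boundary pattern whose density is just right and then carrying out the verification. The paper does this via an explicit gadget it calls an \emph{$i$-pattern}: along a side one selects $i$ vertices spaced at distance $3$ and thereafter every fourth vertex. The key counting fact is that any subpath of length $\ell$ meets at most $\lceil (\ell+1+i)/4\rceil$ selected vertices. With this in hand the paper reduces to even $n,m$, places suitable $i$-patterns on the four sides (choice of $i$ depending on parities), and bounds $|B\cap P|$ by summing the contributions of the at most four sides that $B$ meets; the resulting inequality gives $|B\cap P|\le r$ for all $r\le k+k'-5$, and the last four radii are handled by a parity (chessboard-colour) argument on the corners. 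Grids with a side of length $4$ or $6$ need a separate ``long grid'' pattern on just two sides, and a handful of sporadic cases are checked directly.

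Two of your proposed shortcuts would also fail in practice. You cannot restrict the ball centre to one quadrant ``by symmetry'', because the multipacking itself is not symmetric (the $i$-patterns on opposite sides differ when the half-length is odd). And the condition $|N_r(v)\cap M|\le r$ is not monotone in $r$ in any way that lets you check only the radii where a new point enters; both sides grow with $r$ and the tight cases in the paper occur at the largest radii, not the smallest.
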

This gives an infinite family of non-chordal graphs for which $\Mp =
\Bc$. Another such family is the cycles of length $0$ modulo $3$.  It
is trivial to verify that $\Mp(C_{3k}) = \Bc(C_{3k}) = k$.

Dunbar et al.~\cite{dun_al_2006} gave the exact value of the broadcast
number for grids.

\begin{theorem}[Dunbar et al.~{\cite[Th. 28]{dun_al_2006}}]
  For any pair of positive integers~$n$ and~$m$, 
  \begin{equation*}
    \Bc(P_n \Box P_m) = \left\lfloor \frac{n}{2} \right\rfloor + \left\lfloor \frac{m}{2} \right\rfloor.
  \end{equation*}
  \label{thm:bcgrids}
\end{theorem}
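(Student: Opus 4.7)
The plan is to prove the upper and lower bounds separately; only the lower bound is substantive.

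\textbf{Upper bound.} I will exhibit a dominating broadcast of cost $\lfloor n/2 \rfloor + \lfloor m/2 \rfloor$ using a single broadcaster at the central vertex $v^\star = (\lceil n/2 \rceil, \lceil m/2 \rceil)$ with $f(v^\star) = \lfloor n/2 \rfloor + \lfloor m/2 \rfloor$. Since $\max(\lceil n/2\rceil - 1, n - \lceil n/2\rceil) = \lfloor n/2 \rfloor$ and likewise for $m$, the taxicab eccentricity of $v^\star$ in $P_n \Box P_m$ is exactly $\lfloor n/2 \rfloor + \lfloor m/2 \rfloor$ (attained at the opposite corner), so $f$ dominates the whole grid and $\Bc(P_n \Box P_m) \leq \lfloor n/2 \rfloor + \lfloor m/2 \rfloor$.

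\textbf{Lower bound.} I will show that every dominating broadcast $f$ satisfies $\mathrm{cost}(f) \geq \lfloor n/2 \rfloor + \lfloor m/2 \rfloor$, by induction on $n + m$, with base cases (thin grids, say $n \leq 3$, and paths) verified directly. For the inductive step, fix an optimal $f$ with support $S$. If $|S| = 1$, the single broadcaster $v$ must have $f(v) \geq \mathrm{ecc}(v) \geq \R(G) = \lfloor n/2 \rfloor + \lfloor m/2 \rfloor$, and we are done. Otherwise, pick the broadcaster $v = (i,j) \in S$ with smallest row index. Peel the topmost $i$ rows off the grid, arguing that the restriction of $f \setminus \{v\}$ is a dominating broadcast on $P_{n-i} \Box P_m$, and apply the inductive hypothesis: $\mathrm{cost}(f) - f(v) \geq \lfloor (n-i)/2 \rfloor + \lfloor m/2 \rfloor$. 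Since dominating the cell $(1, j)$ forces $f(v) \geq i - 1$, and the elementary identity $\lfloor n/2 \rfloor - \lfloor (n-i)/2 \rfloor \leq \lceil i/2 \rceil \leq \max(1, i-1)$ holds for all $i \geq 1$, this closes the induction after a brief parity check.

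\textbf{Main obstacle.} The crux is making the peeling step genuinely valid. The broadcaster $v$ may be covering cells deep in rows $i+1, \ldots, n$ that no other broadcaster reaches, so naively deleting $v$ leaves the shrunken grid under-dominated. The remedy is to peel only those top rows that $v$ uniquely dominates, and to reduce (rather than delete) $v$'s power, accounting for how much of $f(v)$ is spent on the removed strip versus on the remaining subgrid. A symmetric column-direction argument, combined with the row argument, then handles general configurations. An alternative route is to exhibit an explicit multipacking of size $\lfloor n/2 \rfloor + \lfloor m/2 \rfloor$ and invoke $\Mp(G) \leq \Bc(G)$, but this is precisely the strategy the present paper develops for Theorem~\ref{thm:grids}, whereas Dunbar et al.'s proof is presumably direct.
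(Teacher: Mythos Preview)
The paper does not itself prove Theorem~\ref{thm:bcgrids}; it is quoted from Dunbar et al.\ and the only argument the paper offers (in the Remark following the statement) is the standard one: the upper bound is $\Bc(G)\le\R(G)$, and the lower bound follows from the multipacking of size $\R(G)$ constructed in Section~\ref{sec:grid} together with $\Mp(G)\le\Bc(G)$. Your upper bound matches this exactly.

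Your lower-bound induction, however, has a genuine gap, and your ``Main obstacle'' paragraph already puts its finger on it without closing it. Two concrete problems:
\begin{itemize}
\item The claim ``dominating the cell $(1,j)$ forces $f(v)\ge i-1$'' is unjustified. You chose $v=(i,j)$ merely as the broadcaster with smallest row index; nothing says $v$ is the one that covers $(1,j)$. Some other broadcaster $w$ in a row $\ge i$ may cover $(1,j)$, and then the inequality you need is on $f(w)$, not $f(v)$---but $w$ is not the vertex you are removing.
\item Even granting the inequality, the key step ``the restriction of $f\setminus\{v\}$ is a dominating broadcast on $P_{n-i}\Box P_m$'' is simply false in general: $v$ can be the unique dominator of vertices arbitrarily deep in the remaining rows. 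Your proposed remedy (peel only the rows uniquely dominated by $v$, then reduce $f(v)$) does not come with any bookkeeping showing that the power you shave off is at least the radius lost by the peel, and there is no reason it should be---a broadcaster near the top can spend most of its power reaching sideways or downward rather than upward.
\end{itemize}
A one-dimensional peeling argument of this flavour works for paths, but in two dimensions the geometry of balls defeats it; that is precisely why the paper (and Dunbar et al.) go through the dual bound instead. As written, the lower-bound sketch is not a proof and does not have an evident fix short of the multipacking route you mention at the end.
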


\begin{remark} 
The value of $\Bc(P_n \Box P_m)$ given by Theorem~\ref{thm:bcgrids} is
the radius of the grid. Since there is always a dominating broadcast
with cost $\R(G)$~\cite{dun_al_2006,Erwin2004}, and our proof of
Theorem~\ref{thm:grids} yields a multipacking of size $\R(G)$, this
paper gives an alternative proof of Theorem~\ref{thm:bcgrids}.
\end{remark}

\section{Preliminaries and small grids}

We use standard notation throughout the paper.  Specific to our work
is the following: the grid $P_n \Box P_m$ has $n$ rows and $m$
columns.  We may also say the grid has height $n$ and length $m$.  The
vertex in row $i$ and column $j$ is denoted $v_{i,j}$.  As a
convention, the vertex $v_{0,0}$ is the bottom, left corner of the
grid.  The integers between $k$ and $\ell$ inclusive are denoted
$\llbracket k,\ell \rrbracket$.

The proof of Theorem~\ref{thm:grids} is technical. In order to
ease the process, we start with an easy counting lemma.

\begin{lemma}
  \label{lem:path}
  Let $G$ be a graph, $k$ be a positive integer and
  $u_0,\ldots,u_{3k}$ be an isometric path in $G$. 
  Let \mbox{$P=\{u_{3i} : i \in \llbracket 0,k \rrbracket \}$} be the
  set of every third vertex on this path. Then, for any positive integer $r$ and any ball
  $B$ of radius $r$ in~$G$,
  \begin{equation*}
    |B \cap P| \leq \left\lceil \frac{2r+1}{3} \right\rceil.
  \end{equation*}
\end{lemma}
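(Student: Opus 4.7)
The plan is to exploit the triangle inequality together with the fact that the path $u_0,\ldots,u_{3k}$ is isometric, which converts upper bounds on graph distances into upper bounds on index differences.

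First I would fix an arbitrary ball $B = N_r(v)$ and consider two elements $u_{3a}, u_{3b} \in B \cap P$ with $a \leq b$. Since both are at graph-distance at most $r$ from $v$, the triangle inequality gives $d(u_{3a}, u_{3b}) \leq d(u_{3a}, v) + d(v, u_{3b}) \leq 2r$. Because the path is isometric, $d(u_{3a}, u_{3b}) = 3(b-a)$, so $b - a \leq \lfloor 2r/3 \rfloor$. Thus all indices $i$ with $u_{3i} \in B$ lie in a range of at most $\lfloor 2r/3 \rfloor + 1$ consecutive integer values of $i$, which immediately yields $|B \cap P| \leq \lfloor 2r/3 \rfloor + 1$.

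Next I would finish by verifying the elementary identity $\lfloor 2r/3 \rfloor + 1 = \lceil (2r+1)/3 \rceil$ via a case check on $r \bmod 3$ (in all three residue classes both sides agree), converting the bound into the form stated in the lemma.

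There is essentially no obstacle here: the isometric path assumption is exactly what is needed to turn the triangle inequality into a constraint on indices, and the rest is the arithmetic identity above. The only thing to be careful about is the degenerate case $|B \cap P| \leq 1$, which is trivially $\leq \lceil (2r+1)/3 \rceil$ for every $r \geq 1$ and so requires no separate argument.
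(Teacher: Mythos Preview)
Your proof is correct and essentially identical to the paper's: both use the triangle inequality and isometry to conclude that $B\cap P$ lies in a subpath of length at most $2r$, then count how many of the $u_{3i}$ can fall in such a subpath. The only cosmetic difference is that you phrase the count as $\lfloor 2r/3 \rfloor + 1$ and then verify the identity with $\lceil (2r+1)/3 \rceil$, whereas the paper simply says ``only one third'' of the $2r+1$ vertices on the subpath can be in $P$.
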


\begin{proof}
  Let $B$ be a ball of radius $r$ in $G$, then any two vertices in $B$
  are at distance at most $2r$. Since the path $(u_0,\ldots,u_{3k})$
  is isometric the intersection of the path and $B$ is included in a
  subpath of length $2r$. This subpath contains at most $2r+1$
  vertices and only one third of those vertices can be in $P$.
\end{proof}

For the sake of completeness, we also determine the multipacking
numbers of grids with height 2 and 3.

\begin{proposition}
Let $n$ be a positive integer.  Then
$$
\Mp(P_n \Box P_2) = 
\left\lceil \frac{2n}{5} \right\rceil
$$
\end{proposition}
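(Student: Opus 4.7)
The plan is to prove the upper and lower bounds separately. For the upper bound, let $M$ be any multipacking of $P_n \Box P_2$. I would first note two easy consequences of the multipacking condition at $r = 1$: (a) each row contains at most one vertex of $M$, and (b) two consecutive rows together contain at most one vertex of $M$ (in each case the two candidate vertices are at distance at most~$2$ and hence share a common ball of radius~$1$). I then establish the sharper local bound that any five consecutive rows contain at most two vertices of $M$. Indeed, if rows $i < j < k$ with $k - i \leq 4$ each held an $M$-vertex, then (a)--(b) force $j - i = k - j = 2$ together with alternating columns $c, 1-c, c$; but then the ball of radius $2$ centred at $v_{j, c}$ contains all three $M$-vertices at distances $2, 1, 2$, contradicting the multipacking condition at $r = 2$. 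Writing $n = 5q + s$ with $s \in \{0, 1, 2, 3, 4\}$ and partitioning the rows into $q$ blocks of five consecutive rows plus a block of the $s$ leftover rows, the above bounds yield $|M| \leq 2q + \lceil 2s/5 \rceil = \lceil 2n/5 \rceil$.

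For the lower bound, I would exhibit the explicit zigzag construction
\begin{equation*}
  M = \{v_{5k,0} : 0 \leq 5k \leq n - 1\} \cup \{v_{5k+2, 1} : 0 \leq 5k+2 \leq n - 1\}.
\end{equation*}
A count in each residue class of $n$ modulo~$5$ confirms $|M| = \lceil 2n/5 \rceil$. To check that $M$ is a multipacking, it suffices by containment to verify the inequality $|N_r(v) \cap M_\infty| \leq r$ for the analogous $5$-periodic set $M_\infty$ on the two-way infinite ladder $P_\mathbb{Z} \Box P_2$. The $5$-periodicity in the row coordinate reduces the verification to the ten cases indexed by $(a \bmod 5, c)$, where $v = v_{a, c}$ is the centre. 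For each such centre the distances from $v$ to the vertices of $M_\infty$ split into two arithmetic progressions with common difference~$5$, so $|N_r(v) \cap M_\infty|$ admits a simple closed form.

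The main obstacle is the case analysis for this multipacking verification. The key observation keeping it finite is that exactly two new vertices of $M_\infty$ enter $N_r(v)$ in each window of five consecutive radii, so the count $|N_r(v) \cap M_\infty|$ grows at asymptotic rate~$2/5$; the bound therefore holds automatically for large~$r$, and only a handful of small radii per case need be inspected directly.
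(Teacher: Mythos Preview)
Your upper-bound argument and the explicit zigzag construction are exactly those of the paper. For the verification that $M$ is a multipacking, the paper argues slightly more directly: a ball of radius $r$ spans at most $2r+1$ consecutive rows and hence meets at most $2\lceil (2r+1)/5\rceil$ elements of $M$; this is $\le r$ except for $r\in\{1,3,5\}$, which are dispatched by hand.

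Your infinite-ladder reduction is a perfectly good alternative, but there is a numerical slip in the last paragraph. In $P_{\mathbb Z}\Box P_2$ the ball $N_r(v)$ contains about $4r$ vertices, and $M_\infty$ has density $1/5$, so $|N_r(v)\cap M_\infty|$ grows at asymptotic rate $4/5$, not $2/5$: four new packing vertices (two on each side of the centre) enter per window of five consecutive radii. Since $4/5<1$, your reduction to checking finitely many small radii still goes through, so the approach is sound once the rate is corrected.
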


\begin{proof}
%
Let $P$ be a maximum multipacking of $P_n \Box P_2$. 
We claim that no five consecutive columns contain three members of $P$.
Suppose to the contrary that columns $i$ to $i+4$ contain three members of $P$.
No two consecutive columns each contain a member of $P$, as 
any pair of vertices in $P_2 \Box P_2$ are at distance at most~$2$ apart
(and thus in a ball of radius $1$).
Hence, the three elements are without loss of generality
$\{ v_{i,0}, v_{i+2,1}, v_{i+4,0} \}$.  
However, this implies $|N_2[v_{i+2,0}] \cap P| = 3$, a contradiction.
  
Writing $n=5q+r, 0 \leq r \leq 4$, we conclude that
the first $5q$ columns of the grid contain at most $2q$ elements of $P$.
Next, it is easy to verify that $\Mp(P_1 \Box P_2) = \Mp(P_2 \Box P_2) = 1$,
and $\Mp(P_3 \Box P_2) = \Mp(P_4 \Box P_2) =2$.  
Let $s$ be the number of elements of $P$ in the final $r$ columns
of the grid.  Then, $s = 0$ if $r=0$,
$s \leq 1$ if $r = 1, 2$ and $s \leq 2$ if $r=3,4$. 
Thus, $|P| \leq 2q + \lceil 2r / 5 \rceil$.
Equivalently, $|P| \leq \left\lceil \frac{2n}{5} \right\rceil$.  

On the other hand, consider the set $P$ defined as follows.
\begin{eqnarray*}
v_{i,0} \in P & \mbox{ for } & i \equiv 0 \pmod{5} \\
v_{i,1} \in P & \mbox{ for } & i \equiv 2 \pmod{5} 
\end{eqnarray*}
Consider a ball $B$ of radius $r \geq 2$.  It contains 
vertices from at most $2r+1$ consecutive columns of $P_n \Box P_2$.  
By construction, every five consecutive columns contain at most $2$ 
elements of $P$.  
$$
|B \cap P| \leq 2 \left\lceil \frac{(2r+1)}{5} \right\rceil 
$$
It is straightforward to check that, 
$2 \lceil (2r+1)/ 5 \rceil \leq r$ for $r \neq 1, 3, 5$. 
(For $r<10$, simply evaluate  $2 \lceil (2r+1)/ 5 \rceil$.
For $r \geq 10$, $2 \lceil (2r+1)/ 5 \rceil \leq 2 (2r/5 + 1) \leq r$.)
It is easy to check that each ball of radius~$1$ contains at most one
element of $P$. When $r=3$, $B$ contains vertices from $7$ columns of 
which at most $3$ columns may contain packing vertices.  Similarly, 
when $r=5$, we observe that any $11$ consecutive columns contain at 
most $5$ packing vertices.
\end{proof}

We now turn to the special case when $m=3$. The following result gives
$\Mp(P_n \Box P_3)$.  Since $\Bc(P_n \Box P_3) = \lfloor \frac{n}{2}
\rfloor + 1$, we note that $\Mp = \Bc$ for $n \not\equiv 0 \pmod{4}$.

\begin{proposition}
Let $n$ be a positive integer.  Then
$$ \Mp(P_n \Box P_3) = \left\{ \begin{array}{ll} \left\lfloor
  \frac{n}{2} \right\rfloor & \mbox{ if } n \equiv 0 \pmod{4} \\[4pt]
  \left\lfloor \frac{n}{2} \right\rfloor + 1 & \mbox{ if } n \equiv
  1,2,3 \pmod{4}
\end{array} \right .
$$
\end{proposition}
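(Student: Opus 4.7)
The plan is to prove matching upper and lower bounds on $\Mp(P_n \Box P_3)$.

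For the \textbf{upper bound}, I would first establish the structural lemma that in any multipacking $M$ of $P_n \Box P_3$, every four consecutive rows contain at most two rows with a vertex of $M$. The supporting facts are: (a)~each row has at most one vertex of $M$, since the three cells of a single row all lie in $N_1$ of the middle cell; (b)~two vertices of $M$ in adjacent rows must lie in columns $0$ and $2$, for any other pair is within distance~$2$ and hence in some $N_1$; and (c)~two vertices of $M$ in rows at distance~$2$ must lie in different columns, else they both lie in $N_1$ of the intermediate vertex in the same column. Assuming three of the rows $r, r+1, r+2, r+3$ carry vertices of $M$, a short case analysis rules out every configuration: rows $\{r, r+1, r+2\}$ fail by combining (b) and (c); rows $\{r, r+1, r+3\}$ and $\{r, r+2, r+3\}$ each place three vertices of $M$ within distance~$2$ of a common cell (for instance $v_{r+1, j_2}$ in the first case), giving three elements of $M$ in a ball of radius~$2$. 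Partitioning the rows into blocks of four then gives $\Mp(P_{4k} \Box P_3) \leq 2k$; combined with $\Mp \leq \Bc = \lfloor n/2 \rfloor + 1$ from Theorem~\ref{thm:bcgrids}, this yields the desired upper bound in every residue class of $n$ modulo~$4$.

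For the \textbf{lower bound}, I take the explicit set
\[
M = \{v_{4i, 0} : 0 \leq 4i \leq n - 1\} \cup \{v_{4i+1, 2} : 0 \leq 4i + 1 \leq n - 1\},
\]
whose cardinality matches the claimed value in all four residue classes. To verify that $M$ is a multipacking, I analyse a ball $B = N_r(v_{x, y})$ by the value of $y$. For $y = 1$, a vertex of $M$ lies in $B$ if and only if its row lies in $[x - r + 1, x + r - 1]$, and the rows of $M$ form the set $T = \{m : m \equiv 0 \text{ or } 1 \pmod 4\}$, which contains at most $r$ elements in any interval of length $2r - 2$, since $T$ is organised in consecutive pairs of density~$1/2$. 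For $y = 0$ (the case $y = 2$ is analogous with the roles of the two types reversed), the key observation is that $v_{4i+1, 2} \in B$ implies $v_{4i, 0} \in B$; the count takes the form $2p + q$, where $p$ counts the multiples of~$4$ in $[x - r + 1, x + r - 3]$ (pairs fully in $B$) and $q$ counts those in $\{x - r\} \cup [x + r - 2, x + r]$ (pairs contributing only their column-$0$ vertex).

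The main technical obstacle is proving $2p + q \leq r$ in this last case. The naive bound $p + q \leq \lfloor r/2 \rfloor + 1$ only yields $2p + q \leq r + 1$, so one must observe that $p + q$ attains its maximum only when the multiples of~$4$ in $[x - r, x + r]$ are packed end-to-end, forcing at least one such multiple into the ``boundary'' set $\{x - r\} \cup [x + r - 2, x + r]$. This gives $q \geq 2$ when $r$ is even and $q \geq 1$ when $r$ is odd, which is just enough to yield $2p + q \leq r$; the easier case $p + q \leq \lfloor r/2 \rfloor$ is handled by $2p + q \leq 2(p + q) \leq 2 \lfloor r/2 \rfloor \leq r$.
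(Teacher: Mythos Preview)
Your proof is correct and follows essentially the same approach as the paper: the same structural lemma (at most two packing vertices in any four consecutive rows) for the upper bound, and the same explicit packing $M=\{v_{4i,0}\}\cup\{v_{4i+1,2}\}$ for the lower bound. The only real difference is in how you verify that $M$ is a multipacking: the paper cases on the parity of $r$ and, when the ball meets $r+1$ packing rows, exhibits a pair of packing vertices at distance $2r+1$ (so at most one can lie in $B$); you instead case on the column $y$ of the centre and count directly via the $2p+q$ decomposition. Both arguments are short and work; yours is a bit more bookkeeping-heavy but avoids the ``find a far pair'' step. One cosmetic point: in the upper-bound case $\{r,r+1,r+3\}$ you cite the centre $v_{r+1,j_2}$, but $j_2$ is never defined---the correct centre is $v_{r+1,j_3}$ where $j_3$ is the column of the packing vertex in row $r+3$ (and symmetrically $v_{r+2,j_0}$ for the case $\{r,r+2,r+3\}$).
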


\begin{proof}
Since $\mathrm{rad}(P_n \Box P_3) = \lfloor n/2 \rfloor + \lfloor 3/2
\rfloor$, we know that $\Mp(P_n \Box P_3) \leq \lfloor n/2 \rfloor + 1$.
Given a multipacking $P$ of $P_n \Box P_3$ and any four consecutive
columns say $i, i+1, i+2, i+3$, if $P$ has three members in
these columns, then without loss of generality they belong to columns $i,
i+2, i+3$.  Moreover, we can assume that the two in columns $i+2, i+3$ are
$v_{i+2,2}$ and $v_{i+3,0}$.  The only vertices that are not within
distance~$2$ of either of these two packing vertices are $v_{i,0}$ and $v_{i,1}$. 
However, all three of these vertices are in a ball of radius~$2$ centred at
$v_{i+2,0}$ in the former case and $v_{i+2,1}$ in the latter, a contradiction. 
Thus, the four columns contain at most $2$ packing vertices. 
Specifically, in the case $n = 4q$, $\Mp(P_n \Box P_3) \leq 2q = \lfloor n/2 \rfloor$.

On the other hand, consider the set $P$ defined as follows.
\begin{eqnarray*}
v_{i,0} \in P & \mbox{ for } & i \equiv 0 \pmod{4} \\
v_{i,2} \in P & \mbox{ for } & i \equiv 1 \pmod{4} 
\end{eqnarray*}
As the minimum distance between vertices in $P$ is $3$, no ball of
radius~$1$ contains more than one element of $P$.  Consider a ball $B$
of radius $r \geq 2$.  The ball contains vertices from at most $2r+1$
consecutive columns.  We need to confirm that the ball has at most $r$ elements of $P$.
First, suppose that $r = 2t$.  By symmetry, we may assume that the left most column
of $B$ is in $\{ 0, 1, 2, 3 \}$.  If the left most column is $2$ or $3$,
then $B$ contains vertices from columns $\{ 4, 5, \dots, 4t + 2, 4t + 3 \}$.
Each contiguous block of four columns contains two members of $P$, giving $B$ 
has a total of at most $2t = r$ vertices of $P$.  
If the left most column of $B$ is $0$ or $1$,
then $B$ covers columns $0, 1, \dots, 4t$ or $1, \dots, 4t, 4t+1$.
In both cases, $B$ has exactly $2t+1$ columns with a vertex of $P$.
However, in both cases $v_{1,2}$ and $v_{4t,0}$ are at distance $4t+1 = 2r+1$
apart and thus, at most one belongs to $B$.  In all cases, $|B \cap P| \leq r$.
If $r = 2t+1$, the analysis is similar.  
Either the $4t+3$ columns of $B$
contain at most $2t+1 = r$ vertices of $P$, or 
the ball $B$ has $2t+2 = r+1$ columns containing vertices of $P$, but there is a pair
(for example $\{ v_{0,0}, v_{4t+1,2} \}$) at distance $2r+1$, in which case
$B$ itself contains at most $r$ vertices of $P$.
\end{proof}

\section{Multipacking number for large grids}
\label{sec:grid}

In this section, we prove Theorem~\ref{thm:grids}. The radius of a
grid graph $P_n \Box P_m$ is $\lfloor \frac{n}{2} \rfloor + \lfloor
\frac{m}{2} \rfloor$. Since the broadcast number of a graph is at most
its radius, it is sufficient to find a multipacking of size $\lfloor
\frac{n}{2} \rfloor + \lfloor \frac{m}{2} \rfloor$. We now proceed with
the construction of such multipackings.

\subsection{Restriction to even sizes}

First, we shall prove that we can restrict ourselves to cases when $n$
and $m$ are both even numbers. Because of the singularity for the grid
of size $4 \times 6$, we need to check the grids of sizes $5 \times 6$
and $4 \times 7$ by hand (see Figure~\ref{fig:5x6}). Now, suppose that
$n$ is odd. Then, $n-1$ is even and is at least~$4$. Moreover the $n-1
\times m$ grid is not $4 \times 6$ since we ruled out the $5 \times 6$
and $4 \times 7$ cases. Thus, if we know that the grid of size $n-1
\times m$ has a multipacking of size $\frac{n-1}{2} + \lfloor
\frac{m}{2} \rfloor$, which is equal to $\lfloor \frac{n}{2} \rfloor +
\lfloor \frac{m}{2} \rfloor$, we can add an empty column in the middle
of this grid. We obtain a multipacking of the desired size for our
grid. Indeed, given a vertex $v$ from the smaller grid, the ball of
radius $r$ with centre $v$ in the larger grid only contains vertices
of the packing which were at distance at most $r$ from $v$ in the
smaller grid. A ball of radius $r$ centred at a vertex of the new
column only contains vertices of the packing which are within distance
$r$ of both its neighbours from the former grid. Thus, these balls
cannot contain more than $r$ elements of the packing which satisfies
our claim. The same reasoning works for $m$.

  \begin{figure}[ht]
    \begin{center}
      \begin{tikzpicture}[scale=0.5]
        \foreach \i in {0, ..., 5}
        \foreach \j in {0,1,2,3,4}
        \node[inner sep=0] (u\i\j) at (\i,\j) {};
        
        \foreach \i in {0, ..., 5}
        \draw (u\i0)--(u\i4);
        
        \foreach \j in {0, ..., 4}
        \draw (u0\j)--(u5\j);
        
        \foreach \i in {0,1,2,3,4,5}
        \foreach \j in {0,1,2,3,4}
        \node[unselected] at (u\i\j) {};
        
        \node[vertex] at (u00) {};      
        \node[vertex] at (u04) {};
        \node[vertex] at (u50) {};
        \node[vertex] at (u54) {};  
        \node[vertex] at (u23) {};    
        
        \begin{scope}[xshift=8cm]
          \foreach \i in {0,..., 6}
          \foreach \j in {0,...,3}
          \node[inner sep=0] (u\i\j) at (\i,\j) {};
          
          \foreach \i in {0, ..., 6}
          \draw (u\i0)--(u\i3);
          
          \foreach \j in {0,..., 3}
          \draw (u0\j)--(u6\j);
          
          \foreach \i in {0,..., 6}
          \foreach \j in {0,...,3}
          \node[unselected] at (u\i\j) {};
          
          \node[vertex] at (u00) {};      
          \node[vertex] at (u60) {};
          \node[vertex] at (u63) {};
          \node[vertex] at (u03) {};  
          \node[vertex] at (u31) {};
        \end{scope}  
      \end{tikzpicture}
    \end{center}
    \caption{Multipackings of order 5 for grids of size $5 \times 6$
      and $4 \times 7$}
    \label{fig:5x6}
  \end{figure}
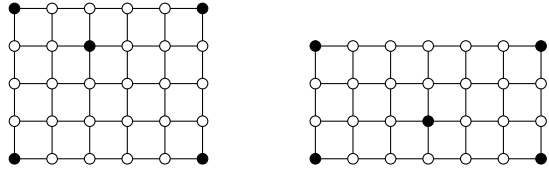

  The remainder of the proof is concerned with grids with even
  dimensions. Small cases require some specific care so that we will
  treat them after the general case. In all cases, we shall use a
  systematic way of selecting vertices along the sides of the grid. We
  describe them in the following paragraph.

  \subsection {The $i$-pattern}
  
  Fix an integer $i$. Given a path $v_0 v_1 \ldots v_{z-1}$ of order
  $z$ greater than or equal to $3i$, the {\em $i$-pattern} on this
  path consists in selecting every third vertex from $v_0$ to
  $v_{3(i-1)}$ and then every fourth vertex starting from $v_{3i}$ (if
  it exists). Note that the $i$-pattern on a path of order $z$ selects
  exactly $i$ vertices from the beginning and one fourth (rounded
  up) of the rest. This amounts to $i + \lceil \frac{z-3i}{4}
  \rceil$ which can be simplified.
  \begin{equation}
    \text{The }i\text{-pattern on a path of order }z \text{ selects exactly } \left\lceil \frac{z+i}{4} \right\rceil \text{ vertices.}
    \label{eq:size_pattern}
  \end{equation}
  Moreover, the density of the $i$-pattern is bounded above by a
  function of $i$. By this, we mean that a subpath of length $\ell$ of
  $v_0 v_1 \ldots v_{z-1}$ cannot hit too many vertices of the
  $i$-pattern. If $\ell$ is at least $3i$, it could take the whole
  beginning ($i$ vertices) and a fourth of the rest. This amounts to
  $i + \lceil \frac{\ell + 1 - 3i}{4} \rceil$ which equals $\lceil
  \frac{\ell+1+i}{4} \rceil$. Whenever $\ell$ is strictly less than
  $3i$, it would take at most $\lceil\frac{\ell+1}{3}\rceil$
  vertices. But in that case, 
  \begin{flalign*}
    && \left\lceil\frac{\ell+1}{3}\right\rceil & \leq \left\lceil\frac{4\ell+4}{12}\right\rceil&&\\
    && & \leq \left\lceil\frac{3\ell+3+\ell+1}{12}\right\rceil&&\\
    && & \leq \left\lceil\frac{3\ell+3+3i}{12}\right\rceil&\text{(since } \ell + 1 \leq 3i \text{)}&\\
    && & \leq \left\lceil\frac{\ell+1+i}{4}\right\rceil.&&
  \end{flalign*}
  In the end, we may state that
  \begin{equation}
    \text{a subpath of length } \ell \text{ hits at most } \left\lceil\frac{\ell+1+i}{4}\right\rceil \text{ vertices on a }i\text{-pattern.}
    \label{eq:density_pattern}
  \end{equation}

  \subsection{Large grids}
  \label{sec:large}
  As said before, small grids require some extra-care. In this part,
  we only consider grids with dimensions at least $8$ in both directions. Fix
  $n$ and $m$ two even integers greater than or equal to $8$. We let
  $k=n/2$ and $k'=m/2$. We view each side of the grid
  as a path from which we remove the last three vertices (see
  Figure~\ref{fig:grid}). In these paths, we pack an adequate
  number of vertices using a specific $i$-pattern. Finally,
  we will estimate an upper bound on the number of such vertices in a
  ball of size $r$. This will cover most of the radii but the
  last few ones will be treated using some tailor-made arguments.
  \begin{figure}[ht]
    \scriptsize
    \begin{center}
      \begin{tikzpicture}
        \node[vertex] (x) at (-3,2) {};
        \node[unselected] (x2) at (2.6,2) {};      
        \node[unselected] (x1) at (2.2,2) {};
        \node[vertex] (z) at (3,-2) {};
        \node[unselected] (z2) at (-2.6,-2) {};
        \node[unselected] (z1) at (-2.2,-2) {};
        \node[vertex] (y) at (3,2) {};
        \node[unselected] (y2) at (3,-1.6) {};
        \node[right] at (y2) {$(n-1,2)$};
        \node[unselected] (y1) at (3,-1.2) {};
        \node[right] at (y1) {$(n-1,1)$};
        \node[vertex] (t) at (-3,-2) {};
        \node[right] at (z) {$(n-1,0)$};
        \node[unselected] (t2) at (-3,1.6) {};
        \node[unselected] (t1) at (-3,1.2) {};
        \node[below left] at (t) {$(0,0)$};
        \node[above right] at (y) {$(n-1,m-1)$};
        \draw (x) -- (x1) -- (x2) -- (y);
        \draw (y) -- (y1) -- (y2) -- (z);
        \draw (z) -- (z1) -- (z2) -- (t);
        \draw (t) -- (t1) -- (t2) -- (x);
        \draw (-3.1,2.1) rectangle (2,1.9);
        \draw (3.1,-2.1) rectangle (-2,-1.9);
        \draw (3.1,2.1) rectangle (2.9,-1);      
        \draw (-3.1,-2.1) rectangle (-2.9,1);      
      \end{tikzpicture}
    \end{center}
    \caption{General sketch, packing on the perimeter.}
    \label{fig:grid}
  \end{figure}
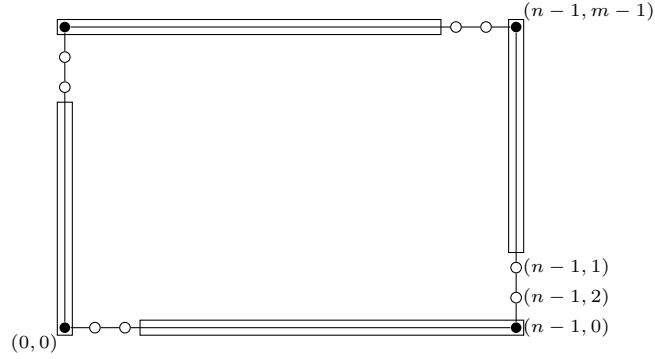
  
  We use the $i$-patterns (where $i = 0, 1,$ or $2$) to select vertices on 
  ``horizontal'' and  ``vertical'' sides.   
  The packing on the horizontal sides is as follows (with the vertical sides
  being similar).  Our choice of $i$ depends on the parity of $k$.  
  In all cases, $2k-4 \geq 3i$.
  If $k$ is even, then we use a $1$-pattern on the top 
  $(0,m-1), (3,m-1), (7,m-1), \dots, (n-5,m-1)$ and a $1$-pattern
  on the bottom $(n-1,0), (n-4,0), (n-8,0), \dots, (4,0)$.  
  In this case we shall write $i_t = i_b = 1$.  If $k$ is odd,
  we use a $2$-pattern on the top $(0,m-1), (3,m-1), (6,m-1), (10,m-1), \dots, (n-4,m-1)$
  and a $0$-pattern on the bottom $(n-1,0), (n-5,0), \dots, (5,0)$.
  In this case we write $i_t = 2$ and $i_b = 0$.
  Using~\eqref{eq:size_pattern}, we see there are exactly $\frac{k}{2}$
  vertices selected on a (horizontal) side, when $k$ is even.
  When $k$ is odd $\lfloor \frac{k}{2} \rfloor$ are selected on the bottom while 
  $\lceil \frac{k}{2} \rceil$ are selected on the top.  In all cases $\frac{n}{2}=k$
  vertices are selected. These selections are depicted on Figure~\ref{fig:selpath}
  for $n=16$ and $n=18$ (only the top and bottom sides of the grid are
  drawn). We call $H$ the set of vertices selected on the horizontal
  paths. Similarly we select a total of $k'$ vertices on the vertical 
  sides and let $V$ denote these vertices. 
  After this process, we have a set $P$ of $k+k'$
  vertices. We shall prove that it is a valid multipacking.
  \begin{figure}[ht]
    \scriptsize
    \begin{center}
      \begin{tikzpicture}[xscale=.5]
        \foreach \s in {0,2}{
          \pgfmathsetmacro{\sh}{\s*.8};
          \pgfmathtruncatemacro{\sn}{16+\s}
          \node at (-.5,0-\sh) {$n=\sn$};
          \draw (1,.3-\sh) -- (16+\s,.3-\sh);
          \draw (1,-.3-\sh) -- (16+\s,-.3-\sh);
          \draw (.8,.4-\sh) rectangle (13.2+\s,.2-\sh);
          \draw (3.8,-.4-\sh) rectangle (16.2+\s,-.2-\sh);
          \foreach \i in {1,...,\sn}{
            \node[unselected] (x\s\i) at (\i,.3-\sh) {};
            \node[unselected] (y\s\i) at (17+\s-\i,-.3-\sh) {};
          }
        }
        
        \foreach \i in {1,4,8,12}{
          \node[vertex] at (x0\i) {};
          \node[vertex] at (y0\i) {};
        }
        
        \foreach \i in {1,4,7,11,15}{
          \node[vertex] at (x2\i) {};
        }
        \foreach \i in {1,5,9,13}{
          \node[vertex] at (y2\i) {};
        }
        
      \end{tikzpicture}
    \end{center}
    \caption{Selection of vertices on horizontal paths}
    \label{fig:selpath}
  \end{figure}
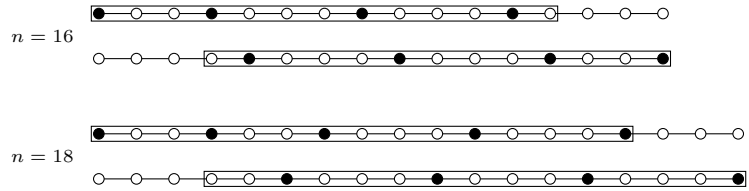
  
  \paragraph{Most balls are valid.}
  Let $r$ be some integer between 1 and $k+k'-5$, and let~$B$ be a ball
  of radius~$r$. 
  \begin{itemize}
  \item  If $B$ does not intersect any side of the grid, then
    its intersection with~$P$ is empty and $|B \cap P| \leq r$
    trivially. 
    
  \item Suppose now that $B$ intersects only one side of the
    grid, or two consecutive sides. Then its intersection with $P$ lies
    on an isometric path of the grid where selected vertices are at
    distance at least 3 from each other. Thus the cardinality of $B \cap
    P$ is bounded above by $\left\lceil\frac{2r+1}{3}\right\rceil$ which
    is at most $r$ since $r$ is a positive integer (see
    Lemma~\ref{lem:path}).
    
  \item Now if $B$ intersects two opposite sides of the grid (let them
    be top and bottom), let $y$ denote the ordinate of the center of
    $B$. Recall that bottom has ordinate 0 while top has ordinate
    $m-1$. Now observe that the metric induced by the grid is similar
    to $\ell_1$ metric. Thus $B$ intersects the bottom side on a
    subpath of length at most $2(r-y)$ and the top side on a subpath
    of length at most $2(r-2k'+1+y)$. We claim that in most cases $|B
    \cap H| \leq r-k'+2$. Only in very specific cases can $|B \cap H|$
    be equal to $r-k'+3$. 

    Let $x_b$ be the difference between $2(r-y)$ and the actual length
    of the intersection between $B$ and the bottom part of $H$ (recall
    that the bottom part of $H$ does not include the last three
    vertices as depicted on Figures~\ref{fig:grid}
    and~\ref{fig:selpath}). Similarly, we define $x_t$ for the top
    part of $H$. Then $B$ intersect the bottom part of $H$ on a
    subpath of length $2(r-y) - x_b$. We now use
    \eqref{eq:density_pattern}:
    \begin{align*}
      |B \cap H| & \leq \left\lceil\frac{2(r-y) - x_b + 1 + i_b}{4} \right\rceil + \left\lceil\frac{2(r-2k'+1+y) -x_t + 1 + i_t}{4} \right\rceil \\
      & \leq \left\lceil\frac{2(r-y) + 1 + i_b - x_b}{4} \right\rceil + \left\lceil\frac{2(r+y)+3+i_t - x_t}{4} \right\rceil - k'.
    \end{align*}
    Notice that $r-y$ and $r+y$ have same parity. First suppose they
    are both odd. Then $2(r-y) - 2$ and $2(r+y) - 2$ are multiples of
    4. We can rewrite our bound.
    \begin{align*}
      |B \cap H| & \leq \left\lceil\frac{2(r-y) - 2 + 3 + i_b - x_b}{4} \right\rceil + \left\lceil\frac{2(r+y)-2+5+i_t - x_t}{4} \right\rceil - k' \\
      & \leq \left\lceil\frac{3 + i_b - x_b}{4} \right\rceil + \left\lceil\frac{5+i_t - x_t}{4} \right\rceil +r - 1 - k'.
    \end{align*}
    Our pattern choice is either $i_b = i_t = 1$ or $i_b = 0$ and $i_t
    = 2$. In both cases, the ceilings add up to at most 3. So $|B \cap
    H| \leq r - k' + 2$ when $r-y$ is odd. Now suppose that $r-y$ is
    even. Then the rewriting is straightforward.
    \begin{align*}
      |B \cap H| & \leq \left\lceil\frac{2(r-y) + 1 + i_b - x_b}{4} \right\rceil + \left\lceil\frac{2(r+y)+3+i_t - x_t}{4} \right\rceil - k' \\
      & \leq \left\lceil\frac{1 + i_b - x_b}{4} \right\rceil + \left\lceil\frac{3+i_t - x_t}{4} \right\rceil +r - k'.
    \end{align*}
    When the pattern is $i_b = i_t = 1$, ceilings add up to at most 2
    and once again $|B \cap H| \leq r - k' + 2$. When $i_b = 0$ and
    $i_t = 2$, ceilings can unfortunately sum up to 3. But for this,
    both $x_b$ and $x_t$ must be 0. Let us be more precise. It means
    that both bottom and top intersections must be full (subpaths of
    length $2(r-y)$ and $2(r-2k'+1+y)$). Moreover, $B \cap H$ must use the
    corner vertex in the top part since otherwise, it would be
    intersecting a path which is nothing but a 1-pattern (or a
    0-pattern). As a consequence, the center of $B$ must be at
    distance exactly $r$ from the top left corner. Moreover, $B$
    cannot reach the top right corner of the grid (otherwise, $x_t$
    would be strictly positive). Similarly, $B$ cannot reach the
    bottom left corner since it is out of the bottom part of $H$ and
    it would require $x_b$ to be strictly positive. 

    \begin{itemize}
    \item If $B$ intersects only the top and bottom part of $H$, then
      $|B\cap H| \leq r-k'+3 \leq r$ since $k'$ is at least 4.

    \item If $B$ intersects also exactly one vertical side. This one
      can contribute at most for $\lceil \frac{k'}{2} \rceil$ (by our
      choice of $P$). Thus, in most cases
      \begin{align*}
        |B \cap P| & \leq r - k' + 2 + \left\lceil \frac{k'}{2} \right\rceil\\
        & \leq r - k' + 2 + \frac{k'}{2} + \frac{1}{2}\\
        & \leq r - \frac{1}{2} (k'-5) 
      \end{align*}
      which is at most $r$ since $k'$ is not less than 4 (when $k' =
      4$ we observe $|B \cap P| \leq r + \frac{1}{2}$ implies $|B \cap
      P| \leq r$ since $|B \cap P|$ is an integer).  In the special
      case when $|B \cap H|$ is $r - k' + 3$, recall that the vertical
      side cannot use the corner so it contributes at most for $\lceil
      \frac{k'}{2} \rceil-1$ and the same conclusion holds.

     \item Finally, if $B$ intersects all four sides, we may use the
       corner observation to state that at most one of the directions
       (vertical or horizontal) can contribute for $r-k'+3$ (or
       $r-k+3$). The other direction contributes at most for $r-k+2$
       (or $r-k'+2$) so that
      \begin{align*}
        |B \cap P| &= |B \cap H| + |B \cap V|\\
        &\leq 2r - (k+k') + 5.
      \end{align*}
      This quantity is less than or equal to $r$ whenever $r$ is
      $k+k'-5$ or less.
      
    \end{itemize}

  \end{itemize}

  \paragraph{Balls with a big radius.}
  To finish our proof, we only need to verify that balls with a radius
  $r$ between $k+k'-4$ and $k+k'-1$ verify our constraint.

  Let us treat the maximum radius $k+k'-1$. Note that since $n$ and
  $m$ are both even, this grid, if seen as a chequerboard, has two 
  diagonally opposite white corners and two diagonally opposite black
  corners.  Suppose a ball of radius $k+k'-1$
  contains all the vertices of $P$. Then it must contain the four
  corners of the grid. Since opposite corners are at distance $2k +
  2k' -2$ it means that the centre of the ball is the middle vertex of
  a shortest path between opposite corners. But this middle vertex
  must be white for one pair of corners and must be black for the 
  other pair, which is impossible. Thus every ball of radius $k+k'-1$
  misses at least one corner.

  Now consider a ball of radius $k+k'-2$. Since both pairs of opposite
  corners are at distance $2k + 2k' -2$, at most one corner of each
  pair can be in a ball of such radius. Thus, such a ball misses at
  least two corners.

  Concerning radius, $k+k'-4$, we can match each corner vertex with
  the second selected vertex from the opposite side. The distance
  between them is $2k+2k'-5$ (corner to a 2-pattern or to a 1-pattern)
  or $2k+2k'-6$ (corner to a 0-pattern) depending on the chosen
  pattern. In any case, since vertices in the ball cannot have
  distance more $2k+2k'-8$, such a ball misses at least 4 vertices
  from the total and is valid.

  Finally, we are left with balls of radius $k+k'-3$. We may again
  consider the same matching. If $k$ or $k'$ is even, we have at least
  one direction with two 1-patterns and so at least three of the pairs
  are at distance $2k+2k'-5$. So the ball misses at least three
  vertices and is valid. The last case is when both $k$ and $k'$ are
  odd. In that case, our matching has two pairs at distance $2k+2k'-5$
  (from which the ball misses at least two vertices) and two pairs
  at distance $2k+2k'-6$. As for radius $k+k'-1$, both last pairs are
  on two different colors of the chequerboard (black and white) so
  that at least one of the four concerned vertices is missed. In the
  end, the ball misses at least three vertices and is valid.

  This concludes the proof for grids with sizes at least 8 in both
  directions.

  \subsection{Long grids}
  \label{sec:long}

  The previous discussion leaves out all grids with one of their
  dimensions either 4 or 6. In this section, we provide a way of
  tackling long grids (for which $k \geq 3k'- \ell$ where $\ell$
  depends on the parity of $k+k'$). In the end,
  there will only remain four cases to study.

  We shall pack vertices only on the top and bottom sides of the grid. We
  consider the whole sides (not the $2k-3$ first vertices as in
  Subsection~\ref{sec:large}). 
  Recall to pack an $i$-pattern on a horizontal side requires $3i \leq n-1$.   
  If $k$ and $k'$ have same
  parity, we use a $(2k'-3)$-pattern on both top and bottom sides. 
  This requires $3(2k'-3) \leq n-1$ or $3k'-4 \leq k$.
  If $k$ and $k'$ have different parities, 
  we use a $(2k'-5)$-pattern on one side (say bottom) and
  a $(2k'-1)$-pattern on the other. This requires $3k'-1 \leq k$.  
  By~\eqref{eq:size_pattern}, this process selects
  \begin{equation*}
    \left\lceil\frac{2k+2k'-3}{4}\right\rceil + \left\lceil\frac{2k+2k'-3}{4}\right\rceil \text{ or } \left\lceil\frac{2k+2k'-1}{4}\right\rceil + \left\lceil\frac{2k+2k'-5}{4}\right\rceil
  \end{equation*}
  vertices. In both cases, this can be simplified as $k+k'$ (in the
  first case, $k+k'$ is even, while it is odd in the latter).
  
  Now, if a ball $B$ of radius $r$ intersects only one horizontal side
  of the grid, this intersection lies on an isometric path from which
  we selected at most every third vertex. Then by
  Lemma~\ref{lem:path}, it cannot contain strictly more than $r$
  vertices. Suppose that the ball $B$ intersects both paths. Like in
  the previous subsection, if this ball has its centre on a vertex with
  ordinate $y$ (0 being the bottom and $m-1$ being the top), then it
  intersects the bottom on a path of length at most $2(r-y)$ and the
  top on a path of length at most $2(r-2k'+1+y)$. Then we
  use~\eqref{eq:density_pattern}. If both sides are packed with
  $(2k'-3)$-patterns,
  \begin{align*}
    |B \cap H| & \leq \left\lceil\frac{2(r-y) + 2k'-2}{4} \right\rceil + \left\lceil\frac{2(r-2k'+1+y) + 2k'-2}{4} \right\rceil \\
    & \leq  \left\lceil\frac{2(r-y+k')-2}{4} \right\rceil + \left\lceil\frac{2(r-k'+y)}{4} \right\rceil.
  \end{align*}
  And since $r-y+k'$ and $r+y-k'$ have same parity, one of the
  ceilings adds $\frac{1}{2}$, and
  \begin{align*}
    |B \cap H| & \leq \frac{2(r-y+k')-2}{4} + \frac{2(r-k'+y)}{4} + \frac{1}{2} \\
    & \leq r.
  \end{align*}
  Similarly, if we use the $(2k'-5)$-pattern on bottom and the
  $(2k'-1)$-pattern on the top, we have 
  \begin{align*}
    |B \cap H| & \leq \left\lceil\frac{2(r-y) + 2k'-4}{4} \right\rceil + \left\lceil\frac{2(r-2k'+1+y) + 2k'}{4} \right\rceil \\
    & \leq  \left\lceil\frac{2(r-y+k')-4}{4} \right\rceil + \left\lceil\frac{2(r-k'+y)+2}{4} \right\rceil.
  \end{align*}
  Once again, the rounding adds at most $\frac{1}{2}$ and 
  \begin{equation*}
    |B \cap H|  \leq r.
  \end{equation*}

  When $k' = 2$ and $k$ is even, we use a $2k'-3 = 1$ pattern.
  Thus the previous paragraph is valid for all even $k \geq 2$.
  When $k$ is odd we use a 0-pattern and a 3-pattern. This requires
  $k \geq 5$.  In particular, we have valid multipackings for $4 \times n$
  for any even $n \geq 4$ and $n \neq 6$.  
  In the same manner the previous paragraph gives a valid 
  multipacking of order $k+k'$ when $k' = 3$ provided 
  $k \geq 8$ for even $k$ and $k \geq 5$ for odd $k$. 
  Consequently we have packings of grids with dimensions $6 \times n$ 
  for even $n \neq 6, 8, 12$.  This concludes the proof for long grids.  
  (We remark the above arguments
  show for a fixed $k'$ and sufficiently large $k$, there is an 
  optimal multipacking selecting vertices only on the horizontal sides.)
  
  \subsection{Remaining cases}
  \label{sec:remaining}
  Subsection~\ref{sec:large} covers large grids ($4 \leq k \leq k'$),
  and Subsection~\ref{sec:long} covers long grids ($2 \leq k' \leq
  (k+\ell)/3$). There are four remaining cases that can be checked by
  hand, and have been verified using SageMath.
  Three are depicted on Figure~\ref{fig:remaincases}.
  
  \begin{figure}[ht]
  
  \begin{tikzpicture}[scale=0.4]
  \foreach \i in {0, ..., 5}
    \foreach \j in {0,1,2,3,4,5}
        \node[inner sep=0] (u\i\j) at (\i,\j) {};
        
  \foreach \i in {0, ..., 5}
    \draw (u\i0)--(u\i5);
    
  \foreach \j in {0, ..., 5}
    \draw (u0\j)--(u5\j);
    
  \foreach \i in {0,1,2,3,4,5}
    \foreach \j in {0,1,2,3,4,5}
        \node[unselected] at (u\i\j) {};
        
  \node[vertex] at (u00) {};      
  \node[vertex] at (u05) {};
  \node[vertex] at (u50) {};
  \node[vertex] at (u55) {};  
  \node[vertex] at (u12) {};    
  \node[vertex] at (u42) {};  
  
  \begin{scope}[xshift=8cm]
    \foreach \i in {0, ..., 7}
    \foreach \j in {0,1,2,3,4,5}
        \node[inner sep=0] (u\i\j) at (\i,\j) {};
        
  \foreach \i in {0, ..., 7}
    \draw (u\i0)--(u\i5);
    
  \foreach \j in {0, ..., 5}
    \draw (u0\j)--(u7\j);
    
  \foreach \i in {0, ..., 7}
    \foreach \j in {0,1,2,3,4,5}
        \node[unselected] at (u\i\j) {};
        
  \node[vertex] at (u00) {};      
  \node[vertex] at (u05) {};
  \node[vertex] at (u70) {};
  \node[vertex] at (u75) {};  
  \node[vertex] at (u30) {};    
  \node[vertex] at (u35) {};   
  \node[vertex] at (u63) {};
  \end{scope}  
  
   \begin{scope}[xshift=19cm]
    \foreach \i in {0, ..., 11}
    \foreach \j in {0,1,2,3,4,5}
        \node[inner sep=0] (u\i\j) at (\i,\j) {};
        
  \foreach \i in {0, ..., 11}
    \draw (u\i0)--(u\i5);
    
  \foreach \j in {0, ..., 5}
    \draw (u0\j)--(u11\j);
    
  \foreach \i in {0, ..., 11}
    \foreach \j in {0,1,2,3,4,5}
        \node[unselected] at (u\i\j) {};
        
  \node[vertex] at (u00) {};      
  \node[vertex] at (u05) {};
  \node[vertex] at (u40) {};
  \node[vertex] at (u55) {};  
  \node[vertex] at (u70) {};    
  \node[vertex] at (u85) {};   
  \node[vertex] at (u110) {};
  \node[vertex] at (u115) {}; 
  \node[vertex] at (u23) {};    
  \end{scope}             
  \end{tikzpicture}
  
  \caption{Multipacking for $6 \times 6$, $8 \times 6$, and $12 \times 6$ grids.}
    \label{fig:remaincases}
  \end{figure}
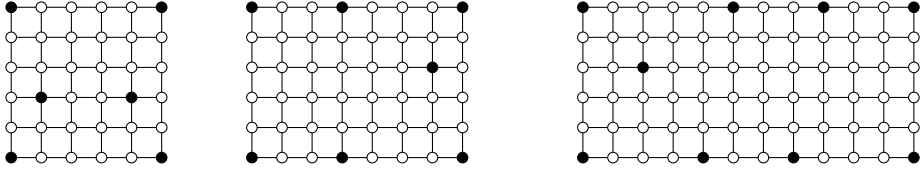

  Finally, the $6 \times 4$ grid is the
  only grid with dimensions at least 4 and multipacking number strictly
  smaller than expected. It is 4 while its broadcast domination number
  is 5. This completes the proof of Theorem~\ref{thm:grids}.

\section*{Acknowledgement}

This research was initiated when the first author and Florent Foucaud
visited the second author at Thompson Rivers University (Kamloops, BC,
Canada) in June 2016. First author is grateful to Universit\'e Clermont
Auvergne which was his institution at that time.

Both authors are grateful to anonymous referees who pointed out a flaw
in the proof of the main theorem in the first submitted version of
this article.



\end{document}